\newtheorem{theorem}{Theorem}
\newtheorem{definition}{Definition}
\newtheorem{proposition}{Proposition}
\begin{document}

\title{Scalable Steiner Tree for Multicast Communications in Software-Defined Networking}


\author{\IEEEauthorblockN{Liang-Hao Huang,~Hui-Ju Hung,~Chih-Chung Lin, and~De-Nian Yang}
\IEEEauthorblockA{Academia Sinica, Taipei, Taiwan\\
\{lhhuang, hjhung, chchlin, dnyang\}@iis.sinica.edu.tw}}

\maketitle

\begin{abstract}
Software-Defined Networking (SDN) enables flexible network resource
allocations for traffic engineering, but at the same time the scalability
problem becomes more serious since traffic is more difficult to be
aggregated. Those crucial issues in SDN have been studied for unicast but have not
been explored for multicast traffic, and addressing those issues for
multicast is more challenging since the identities and the number of
members in a multicast group can be arbitrary. In this paper, therefore, we
propose a new multicast tree for SDN, named Branch-aware Steiner Tree (BST).
The BST problem is difficult since it needs to jointly minimize the numbers of the edges and the branch nodes in a tree, and we prove that it is NP-Hard and
inapproximable within $k$, which denotes the number of group members. We further
design an approximation algorithm, called Branch Aware Edge Reduction
Algorithm (BAERA), to solve the problem. Simulation results demonstrate that
the trees obtained by BAERA are more bandwidth-efficient and scalable than
the shortest-path trees and traditional Steiner trees. Most importantly,
BAERA is computation-efficient to be deployed in SDN since it can generate a
tree on massive networks in small time.
\end{abstract}

\begin{IEEEkeywords}
SDN, multicast, NP-Hard, traffic engineering, scalability
\end{IEEEkeywords}

\section{Introduction}

\IEEEPARstart{S}{oftware-Defined} Networking (SDN) is an emerging
architecture that is manageable, dynamic, cost-effective, and adaptable,
making it ideal for the high-bandwidth, huge data, and dynamic nature of
numerous network services  \cite{sdnwebsite}. This novel architecture decouples the network
control and forwarding functions. It enables the network control to become
directly programmable and the underlying infrastructure to be abstracted for
varied applications. The OpenFlow protocol has been recognized as a crucial
element for building SDN solutions \cite{sdnwebsite, McKeown2008, OpenFlow2013}.

SDN comprises two main components: SDN controller (SDN-C) and SDN
forwarding element (SDN-FE) \cite{OpenFlow2013}. Compared with the
traditional shortest-path routing, SDN-C enables the centralized computation on
unicast routing for traffic engineering \cite{Agarwal2013} to improve the network
throughput. Nevertheless, since the routing paths no longer need to be the
shortest ones, the paths can be distributed flexibly inside the network and
thus are more difficult to be aggregated in the flow table of SDN-FE, and
the scalability has been regarded as a serious issue to deploy SDN in a
large network \cite{Agarwal2013, Kanizo2013}.

\begin{figure}[h]
\centering
\subfigure[Original network]{
    \label{fig1:subfig:a}
    \includegraphics[scale=0.50]{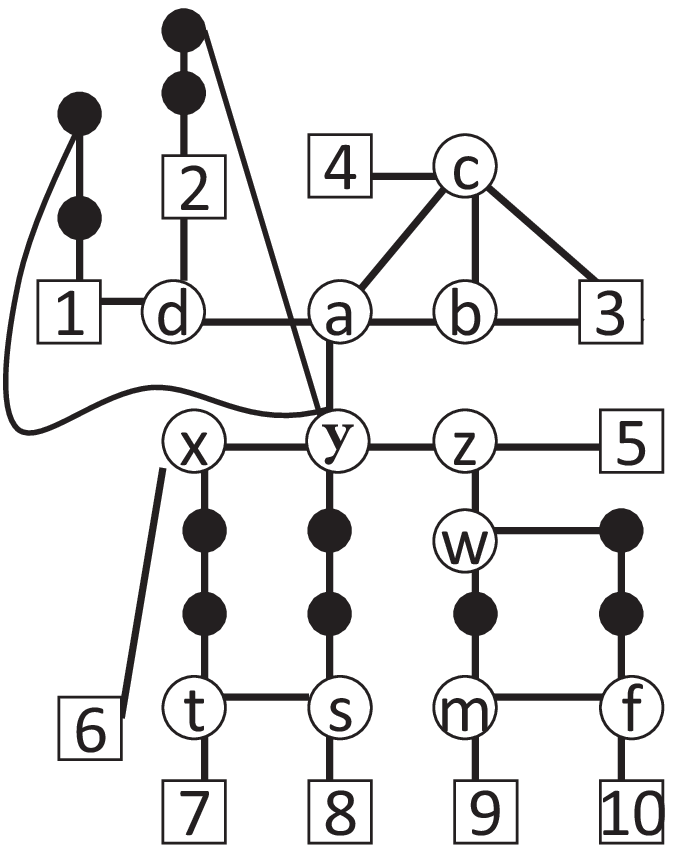}}
\subfigure[Shortest-path tree]{
    \label{fig1:subfig:b}
    \includegraphics[scale=0.50]{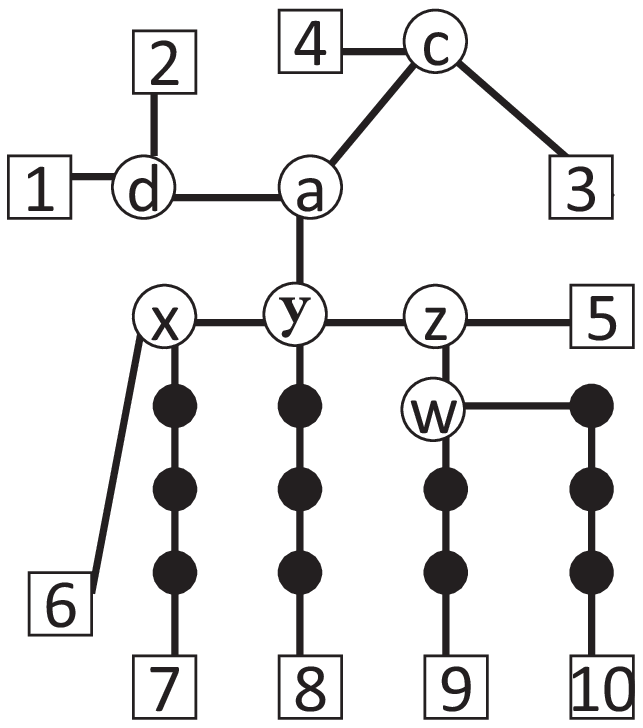}}
\subfigure[Steiner tree]{
    \label{fig1:subfig:c}
    \includegraphics[scale=0.50]{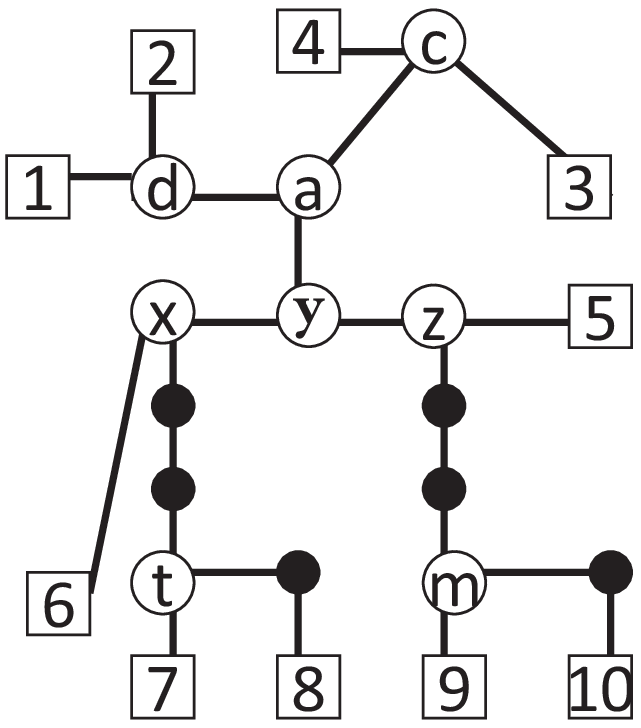}}
\subfigure[Branch-aware Steiner tree]{
    \label{fig1:subfig:d}
    \includegraphics[scale=0.50]{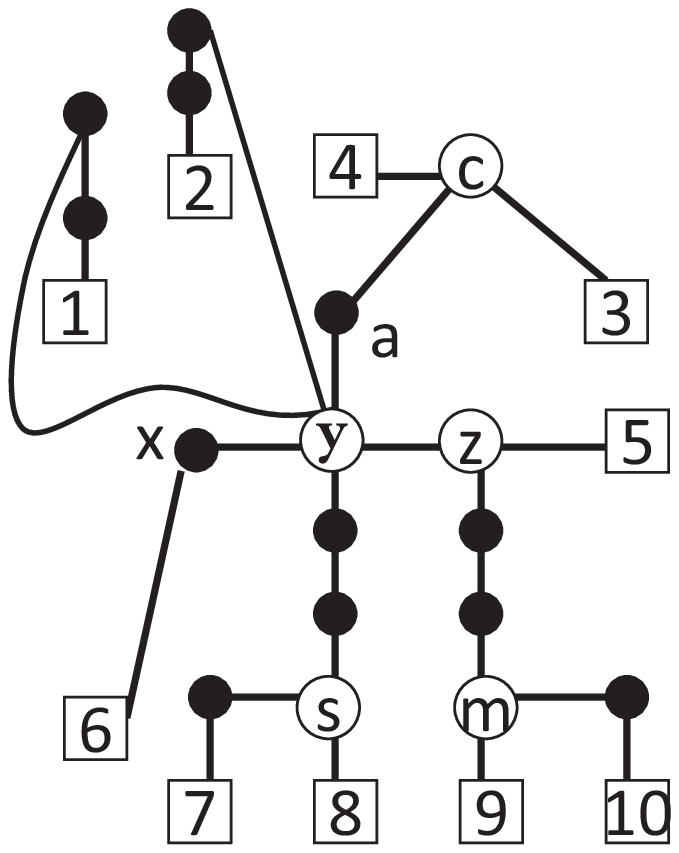}}
\caption{An example of multicase tree}
\label{fig1:subfig}
\end{figure}

Multicast is an efficient technique for point-to-multipoint (P2M) and
multipoint-to-multipoint (M2M) communications because it exploits a tree,
instead of disjoint paths, in the routing of the traffic. Current multicast
standard on Internet, i.e., PIM-SM \cite{Fenner2006}, employs a shortest-path tree to
connect the terminal nodes in a multicast group, where a terminal node is a designated router connecting to a LAN with at least one user client joining the group \cite{Cain2002}. Traffic engineering is
difficult to be supported in a shortest-path tree since the path from the
root, i.e., the traffic source in P2M or the rendezvous point in M2M in
PIM-SM, to each destination in the tree is still the shortest path. By contrast, a
Steiner Tree (ST) \cite{Takahashi1980} in Graph Theory is more promising because it minimizes
the network resource consumption, i.e., the number of edges in a tree,
required for a multicast group. However, finding an ST is more computation
intensive and thus is difficult to be deployed as a distributed protocol on
Internet. By contrast, now it becomes feasible by first finding an ST in
SDN-C and then storing the forwarding information in the group tables of
SDN-FEs on the tree.

Similar to unicast traffic engineering in SDN, multicast traffic engineering
also suffers from the scalability problem since each SDN-FE in the tree
needs to store a forwarding entry in the group table for each multicast
group. Nevertheless, the scalability problem for multicast communications is
even more serious since the number of possible multicast group is $O(2^{n})$%
, where $n$ is the number of nodes in a network, and the number of possible
unicast connections is $O(n^{2})$. To remedy this issue, a promising way is
to exploit the \textit{branch forwarding technique} \cite{Yang2008, YangLiao2008, Tian1998, Stoica2000, Wong2000}, which stores the
entries in only the branch nodes, instead of every node, of a multicast
tree. More specifically, a branch node in a tree is the node with at least
three incident edges, such as white circle nodes in Fig. 1, and the square nodes are the terminal nodes. To
minimize the total number of edges in an ST, the path connecting two
neighboring branch nodes (such as nodes \textit{c} and \textit{y} in Fig. 1(d))
needs to be the shortest path between them. Note that an ST is not a
shortest-path tree because the branch nodes can be located anywhere in the
network. This branch forwarding technique can remedy the multicast
scalability problem since packets are forwarded in a unicast tunnel from the
logic port of a branch node in SDN-FE \cite{OpenFlow2013}. In other words, all nodes
in the path (such as black circle nodes in
Fig. 1) exploit unicast forwarding in the tunnel and are no longer necessary to maintain a forwarding entry for the multicast
group.

To effectively address the multicast scalability problem in SDN, it is
crucial to minimize the number of branch nodes in a tree. However, this
important factor has not been considered in ST. In this paper, therefore, we
propose a new multicast tree for SDN, named \textit{Branch-aware Steiner Tree%
} (\textit{BST}). The objective of BST problem is to minimize the summation
of the number of edges and the number of branch nodes in the tree, where a
branch node can be assigned a higher weight to further improve the
scalability. Fig. 1 presents an illustrative example with the weight of each
branch node set as 20. Square nodes are the terminal nodes that are required
to be connected in a tree, while the black and white circle nodes are the
other nodes in the network. Fig. 1(a) is the network topology. The
shortest-path tree in Fig. 1(b) includes 27 edges and 7 branch nodes with
the total cost of the tree as $27+7\times 20=167$. The Steiner tree in Fig.
1(c) has 23 edges and 8 branch nodes with the total cost as $23+8\times
20=183$. By contrast, Fig. 1(d) presents the BST with 26 edges and 5 branch
nodes and the total cost as $26+5\times 20=126$. Therefore, compared with the
shortest-path trees on Internet, BST effectively reduces the network
resource consumption by minimizing the number of edges in the tree. Compared
with ST, more BSTs can be supported in SDN since the number of branch nodes
is effectively minimized.

Finding an BST is very challenging. The ST problem is NP-Hard but can be
approximated within ratio 1.55 \cite{Robins2000} and is thus in APX of complexity theory. 
In other words, there exists an approximation algorithm for ST that can find
a tree with the total cost at most 1.55 times of the optimal solution. By
contrast, we prove that BST is NP-Hard but cannot be approximated within
\textit{k}, which denotes the number of terminal nodes in a multicast group. In
other words, the BST problem is more difficult to be approximated. To
effectively solve BST, we propose a \textit{k}-approximation algorithm,
named \textit{Branch Aware Edge Reduction Algorithm} (BAERA), that can be
deployed in SDN-C. BAERA includes
two phases, Edge Optimization Phase and Branch Optimization Phase, to
effectively minimize the number of edges and branch nodes. Since no $%
(k^{1-\epsilon })$-approximation algorithm exists in BST for arbitrarily
small $\epsilon >0$, BAERA achieves the best approximation ratio.

{The rest of this paper is organized as follows. Section II briefly
summarizes the literature on SDN traffic engineering, SDN flow table
scalability, multicast scalability, and the Steiner tree. Section III formally
presents the problem formulation with Integer Programming and the hardness
result. We design a \textit{k}-approximation algorithm in Section IV, and
Section V presents the simulation results to evaluate the performance of the
proposed algorithm in real networks. We conclude this paper in Section VI.}

\section{Related Works}

Previous works have extensively explored the issues on traffic engineering
and flow table scalability for \textit{unicast traffic} in SDN. Mckeown et al. \cite{McKeown2008} pointed out that OpenFlow can
be deployed with heterogeneous switches. Sushant et al. \cite{Sushant2013} shared their experience of SDN development for the private WAN of Google Inc.
Qazi et al. \cite{Qazi2013} proposed a new system design using SDN for the middleboxes (e.g., firewalls, VPN gateways, proxies).
Agarwal et al. \cite{Agarwal2013} considered the incremental deployment of traffic engineering in the case
where a SDN-C controls only a few SDN-FEs in the network, and the rest of
the network adopts a standard routing protocol, such as OSPF. The merits of
traffic engineering brought by only a limited number of SDN-capable nodes
are demonstrated. Mueller et al. \cite{Mueller2013} presented a cross-layer framework
in SDN, which integrates a novel dynamic traffic engineering approach with
an adaptive network management, to bridge the gap between the network and
application layers for overall system optimizations.

On the other hand, flow table scalability is crucial to enable a large-scale
deployment of SDN. For unicast traffic, Kanizo et al. \cite{Kanizo2013} pointed out that
the restriction on table sizes is the major bottleneck in SDN and proposed a
framework, called Palette, to decompose a large SDN table into small ones
and then distribute them across the network. Lee et al. \cite{Lee2013} observed that
Data Center traffic frequently meets few elephant flows and a lot of mice
flows. However, elephant flows are inclined to be evicted because of the
limited flow table sizes. They proposed a differential flow cache framework
that uses a hash-based cache placement and localized Least Recently Used
(LRU)-based replacement to reduce the loss of elephant flows.

The scalability issue is more serious in multicast, and the previous works
\cite{Yang2008}, \cite{YangLiao2008, Tian1998, Stoica2000, Wong2000} have demonstrated that the branch forwarding technique is a promising
way since forwarding from a branch node to a neighbor branch node or terminal node can exploit the
existing unicast tunneling technique, and tunneling can be facilitated in
SDN with logic ports specified in the group table \cite{OpenFlow2013}. In other words, the
intermediate nodes between two neighbor branch routers no longer need to
store a multicast forwarding entry for the tree. However, the above works
were designed for shortest-path trees and did not explore the possibility of more flexible multicast routing. On the other hand, Steiner tree \cite{Takahashi1980} can
effectively minimize the bandwidth consumption in a network, but so far it
is not adopted on Internet since finding the optimal Steiner tree is more
computation intensive and thus difficult to be deployed as a distributed
protocol. To remedy this issue, overlay
Steiner trees \cite{YangL07IPDS, Aharoni1998} for P2P environments are proposed, where only the terminal nodes can
act as branch nodes. Nevertheless, the merit of traffic engineering from the
above work is limited since no other router can act as the branch node to reduce the bandwidth consumption. Moreover,
multicast scalability is not studied in the above works. Therefore, the
above works are difficult for bandwidth-efficient and scalable multicast in
SDN.


\section{Preliminaries}

\subsection{Problem Formulation}

In this paper, we propose a scalable and bandwidth-efficient multicast tree
for SDN, called \textit{Branch-aware Steiner Tree (BST)}. This paper aims to
minimize the bandwidth consumption (i.e., the total number of links/edges)
and the number of forwarding entries maintained for the multicast group
(i.e., the total branch nodes). Therefore, the BST problem is to find a tree
connecting a given set of terminal nodes such that the sum of the number of
edges and the number of branch nodes is minimized, where a branch node can
be assigned a larger weight $w$ to ensure a higher scalability.\footnote{%
Note that this problem can be simply extended to support different weights
on each edges and each nodes. For example, a congested edge or a node with
the group table almost fulled can be assigned a higher weight.}

\begin{definition}
Consider a network $G(V,E)$, where $V$ and $E$ denote the set of nodes and
edges, respectively. Given $G(V,E)$, a terminal node set $K\subseteq V$, and
a non-negative value $w$, the BST problem is to find a tree $T$ spanning the
terminal node set $K$ such that $c(T)+b(T)w$ is minimized, where $c(T)$ is
the number of edges on $T$, and $b(T)$ is
the number of branch nodes (i.e., nodes with the degree at least 3 on $T$).
\end{definition}

In BST, a network operator can increase the scalability of multicast in SDN by
assigning a larger weight $w$ for branch nodes. Compared with ST, $c(T)$ may
slightly increase, but much fewer branch nodes will be selected in $T$.
Compared with the shortest-path trees adopted on Internet currently, BST
allows more flexible routing of a tree and thus can effectively reduce the
network resource consumption and improve the scalability in SDN.

In the following, we first formulate the BST\ problem as an Integer
Programming problem. Afterwards, we show that the BST problem is very
challenging in complexity theory by proving that it is NP-Hard and not able
to be approximated within $k^{c}$ for every $c<1$.

\subsection{Integer Programming}
\label{subsec:integer_programming}

Let $N_{v}$ denote the set of neighbor nodes of $v$ in $G$, and $u$ is in $%
N_{v}$ if $e_{u,v}$ is an edge from $u$ to $v$ in $E$. Let any terminal node $r$ act as the
root of $T$, i.e., the source, and the destination set $L$ contains the other terminals in $K$, i.e., $%
L=K-\left \{ r\right \} $. The output tree $T$ needs to ensure that there is
only one path in $T$ from $r$ to every node in $L$. To achieve this goal,
our problem includes the following binary decision variables. Let binary
variable $\pi _{l,u,v}$ denote if edge $e_{u,v}$ is in the path from $%
r $ to a destination node $l$ in $L$. Let binary variable $\varepsilon _{u,v}$
denote if edge $e_{u,v}$ is in $T$, where $\varepsilon _{u,v}=\varepsilon
_{v,u}$. Let binary variable $\beta _{v}$ denote if $v$ is a branch node in $%
T$. Intuitively, when we are able to find the path from $r$ to each destination
node $l$ with $\pi _{l,u,v}=1$ on every edge $e_{u,v}$ in the path, the
routing of the tree with $\varepsilon _{u,v}=1$ for every edge $e_{u,v}$ in $%
T$ can be constructed with the union of the paths from $r$ to all destination
nodes in $L$, and every branch node $v$ in $T$ with $\beta _{v}=1$ in $T$
can be identified accordingly.

Most importantly, to guarantee that the union of the paths is a tree, i.e.,
a subgraph without any cycle, the objective function of our Integer
Programming formulation (IP) is as follows.%
\[
\min \sum \limits_{e_{u,v}\in E}\varepsilon _{u,v}+\sum \limits_{v\in
V}w\times \beta _{v}.
\]%
If the tree $T$ contains any cycle, $T$ is not optimal since we are able to
remove at least one edge from the cycle to reduce the objective value, and
ensure that there still exist a path from $r$ to every destination node $l$ in $L$.
To find $\varepsilon _{u,v}$ and $\beta _{v}$ from $\pi _{l,u,v}$, our IP
formulation includes the following constraints.

\begin{center}
\begin{tabular}{cc}
$\sum \limits_{v\in N_{r}}\pi _{l,r,v}-\sum \limits_{v\in N_{r}}\pi
_{l,v,r}=1$, $\forall l\in L,$ & (1)\\
\vspace{2pt}
$\sum \limits_{u\in N_{l}}\pi _{l,u,l}-\sum \limits_{u\in N_{l}}\pi
_{l,l,u}=1$, $\forall l\in L,$ & (2)\\
\vspace{2pt}
$\sum \limits_{v\in N_{u}}\pi _{l,v,u}=\sum \limits_{v\in N_{u}}\pi _{l,u,v}$%
, \\ $\forall l\in L$, $\forall u\in V,u\neq l,u\neq r,$ & (3)\\
\vspace{2pt}
$\pi _{l,u,v}\leq \varepsilon _{u,v}$, $\forall l\in L$, $\forall e_{u,v}\in
E,$ & (4)\\
\vspace{2pt}
$\frac{1}{\left \vert N_{u}\right \vert }\left( -2+\sum \limits_{v\in
N_{u}}\varepsilon _{u,v}\right) \leq \beta _{u}$, $\forall u\in V.$ & (5)%
\end{tabular}
\end{center}

The first three constraints, i.e., (1), (2), and (3), are the flow-continuity constraints to find the
path from $r$ to every destination node $l$ in $L$. More specifically, $r$ is the
flow source, i.e., the source of the path to every destination node $l$, and
constraint (1) states that the net outgoing flow from $r$ is one, implying
that at least one edge $e_{r,v}$ from $r$ to any neighbor node $v$ needs to
be selected with $\pi _{l,r,v}=1$. Note that here decision variables $\pi
_{l,r,v}$ and $\pi _{l,v,r}$ are two different variables because the flow is
directed. On the other hand, every destination node $l$ is the flow destination,
and constraint (2) ensures that the net incoming flow to $l$ is one,
implying that at least one edge $e_{u,l}$ from any neighbor node $u$ to $l$
must be selected with $\pi _{l,u,l}=1$. For every other node $u$, constraint
(3) guarantees that $u$ is either located in the path or not. If $u$ is
located in the path, both the incoming flow and outgoing flow for $u$ are at
least one, indicating that at least one binary variable $\pi _{l,v,u}$ is $1$
for the incoming flow, and at least one binary variable $\pi _{l,u,v}$ is $1$
for the outgoing flow. Otherwise, both $\pi _{l,v,u}$ and $\pi _{l,u,v}$ are
$0$. Note that the objective function will ensure that $\pi _{l,v,u}=1$ for
at most one neighbor node $v$ to achieve the minimum cost. In other words,
both the incoming flow and outgoing flow among $u$ and $v$ cannot exceed $1$.

Constraints (4) and (5) are formulated to find the routing of the tree and
its corresponding branch nodes, i.e., $\varepsilon _{u,v}$ and $\beta _{v}$.
Constraint (4) states that $\varepsilon _{u,v}$ must be $1$ if edge $e_{u,v}$ is
included in the path from $r$ to at least one $l$, i.e., $\pi _{l,u,v}=1$. The tree $T$ is the
union of the paths from $r$ to all destination nodes. Note that here $%
\varepsilon _{u,v}$ and $\varepsilon _{v,u}$ represent the same binary
decision variable because $T$ is not directed. In other words, $\varepsilon
_{u,v}=1$ if edge $e_{u,v}$ is in a path (i.e., a directed flow) from either
direction. The last constraint is the most crucial one. For each node $u$,
if the degree of $u$ is at least $3$ in $T$, $\sum_{v\in
N_{u}}\varepsilon _{u,v}\geq3$ holds, and thus the left-hand-side of
constraint (5) becomes positive, thereby enforcing that $\beta _{u}=1$ and $%
u $ acts as a branch node. Otherwise, the left-hand-side of constraint (5)
is $0$ or negative, allowing $\beta _{u}$ to be $0$ to minimize the cost in
the objective function. In this case, node $u$ is not a branch node in $T$.

\subsection{Hardness result}

The BST problem is NP-Hard because it is equivalent to the ST\ problem when $%
w$ is 0. In other words, the ST problem is a special case of the BST
problem. However, the BST is much more challenging because the ST problem
can be approximated within ratio 1.55 \cite{Robins2000} and is thus in APX in
complexity theory, 
but we find out that BST is much more difficult to be
approximated. The following theorem proves that the BST problem cannot be
approximated within $k^{c}$ for every $c<1$, by a gap-introducing reduction
from the Hamiltonian path problem, which determines whether there exists a
path going through every node on a graph exactly once.


\begin{theorem}
\label{hardness} For any $\epsilon >0$, there exists no $k^{1-\epsilon }$
approximation algorithm for the BST problem, assuming P $\neq $ NP.
\end{theorem}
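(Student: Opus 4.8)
The plan is to give a gap-introducing reduction from the Hamiltonian path problem, exploiting the observation that a spanning tree has no branch node if and only if it is a simple path, i.e., a Hamiltonian path. The weight $w$ then serves as a free lever to amplify the gap between the ``path exists'' case and the ``no path'' case, which is what makes an inapproximability factor as large as $k^{1-\epsilon}$ attainable.

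First I would set up the reduction. Given an instance $G(V,E)$ of Hamiltonian path with $|V|=n$, I construct a BST instance on the same graph $G$, take the terminal set to be $K=V$ (so that $k=n$ and no Steiner nodes are available, forcing $T$ to be a spanning tree of $G$), and set the branch-node weight to $w=(n-1)\,n^{1-\epsilon}$. Since $w$ is polynomial in $n$, the reduction runs in polynomial time.

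Next I would establish the two sides of the gap. Every spanning tree of $G$ has exactly $n-1$ edges, so $c(T)=n-1$ for any feasible $T$ and the only question is the number of branch nodes. If $G$ has a Hamiltonian path, that path is a spanning tree with $0$ branch nodes, so the optimum equals $n-1$. If $G$ has no Hamiltonian path, then no spanning tree is a simple path; since any tree in which every node has degree at most $2$ is a path, every spanning tree must contain at least one node of degree $\geq 3$, i.e., $b(T)\geq 1$, so the optimum is at least $(n-1)+w$. This structural equivalence, ``no branch node $\Leftrightarrow$ Hamiltonian path'', is where the real work happens and is the step I expect to be the crux of the argument.

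Finally I would close the argument by contradiction. Suppose a $k^{1-\epsilon}$-approximation algorithm $\mathcal{A}$ exists. On a ``yes'' instance it returns a value at most $k^{1-\epsilon}\cdot(n-1)=(n-1)n^{1-\epsilon}$, which by the choice of $w$ is strictly less than $(n-1)+w=(n-1)(1+n^{1-\epsilon})$. On a ``no'' instance it returns a value at least the optimum, hence at least $(n-1)+w$. Thresholding the output of $\mathcal{A}$ at $(n-1)+w$ therefore decides Hamiltonian path in polynomial time, contradicting P $\neq$ NP. The main obstacle is not a single hard deduction but keeping the gap arithmetic tight while ensuring $w$ remains polynomially bounded; choosing $w$ exactly as above guarantees both, since it places the ``yes''-case approximation guarantee strictly below the ``no''-case optimum.
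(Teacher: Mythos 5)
Your proof is correct, but it takes a genuinely different route from the paper's. Both arguments reduce from Hamiltonian path and both hinge on the same structural fact (a spanning tree with no branch node is exactly a Hamiltonian path), but the gap is amplified differently. You keep the graph unchanged, set $K=V$ so every feasible tree is a spanning tree with exactly $n-1$ edges, and blow up the branch penalty to $w\approx(n-1)n^{1-\epsilon}=\Theta(k^{2-\epsilon})$; the entire gap lives in the weight $w$. The paper instead keeps $w$ equal to the number of nodes $h$ (so $w=k+1$) and amplifies $k$ by taking $n^{p}$ copies of $G_H$ joined through one extra hub $x$, with $p\geq 2/\epsilon$, so that a single unavoidable extra branch node per copy already forces $\mathrm{OPT}>2hk^{1-\epsilon}$. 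Your version is shorter and the arithmetic is cleaner, but the paper's construction is stronger in one respect: it shows the $k^{1-\epsilon}$ lower bound holds even when $w$ is only about $k$, which is the interesting regime given Proposition~\ref{thm:2k_approx} (a $2k$-approximation exists when $w\leq k$); your instances require $w$ superlinear in $k$, so they leave that regime untouched. Two small points to tidy up: choose $w$ as an explicitly computable rational, e.g.\ $w=(n-1)\lceil n^{1-\epsilon}\rceil$ or simply $w=n(n-1)$ (the gap inequality $n^{1-\epsilon}<1+\lceil n^{1-\epsilon}\rceil$ still holds), so the reduction is unambiguously polynomial-time; and dispose of disconnected inputs separately, since there the BST instance is infeasible rather than expensive.
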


\begin{proof}
We prove the theorem with the gap-introducing reduction from the Hamiltonian
path problem. For an instance $G_{H}(V_{H},E_{H})$ of the Hamiltonian path
problem with any node $v$ on $G_{H}$, we build an instance of the BST
problem on $G(V,E)$, such that \newline
$\bullet $ if a Hamiltonian path starting at $v$ exists in $G_{H}$, $\mathrm{%
OPT}(G)\leq 2h$, and\newline
$\bullet $ if no Hamiltonian path starting at $v$ exists in $G_{H}$, $%
\mathrm{OPT}(G)>2hk^{1-\epsilon }$ \newline
, where $h$ is the number of nodes in $G$ and $\mathrm{OPT}(G)$ is the
optimal solution of $G$ for the BST problem.

We first detail how to build the instance of the BST problem from the
Hamiltonian path problem. For any given $G_{H}$, we construct a new graph $G$
which consists of $n^{p}$ copies of $G_{H}$, where $n$ is the number of
nodes in $G_{H}$ and $p$ is the smallest integer following $p\geq \frac{2}{%
\epsilon }$. One additional node $x$ is added to $G$ to connect to the node $%
v$ of each of the $n^{p}$ copies.
The $K$ is set to $V-\{x\}$ and $w$ is set to $h$, where $h$ is the number
of nodes in $G$, i.e., $h=(n^{p})\times n+1$.

If $G_{H}$ has a Hamiltonian path starting at $v$, consider a tree rooted at $x$, which
includes 1) the edges between $x$ and $v$ of all copies and 2) the edges on
the Hamiltonian path of all copies. The tree is a feasible solution of the
BST problem with only one branch node $x$, and it can act as an upper bound
of the BST in $G$.
Thus, $\mathrm{OPT}(G)\leq h+(h-1)<2h$. On the other hand, if $G_{H}$ does
not have a Hamiltonian path starting at $v$, there must exist at least one
additional branch node in each copy of $G$. Hence, $\mathrm{OPT}%
(G)>hn^{p}\geq 2hn^{p-1}=2h(n^{p+1})^{\frac{p-1}{p+1}}\geq
2h(n^{p+1})^{1-\epsilon }=2hk^{1-\epsilon }$. Since $\epsilon $ can be
arbitrarily small, for any $\epsilon >0$, there is no $k^{1-\epsilon }$
approximation algorithm for the BST problem, assuming P $\neq $ NP. The theorem
follows.
\end{proof}

\section{Algorithm Design}

For BST, the shortest-path tree is not a good solution since the shortest
path for each node $v$ in $K$ is constructed individually. With the aim to
minimize the number of the edges, substituting the shortest path of $v$ with a
longer path can reduce the total edge number when the path mostly overlaps
with the path to another node $v^{\prime }\in K$ \cite{Takahashi1980}. Therefore, it is expected
that aggregating two paths that share more common edges can effectively
reduce the number of edges in $T$. Nevertheless, aggregating two paths that
partially overlap will generate a new branch node, and more branch nodes are
inclined to be created when more paths are aggregated. Without considering
the number of branch nodes created, the solution quality may deteriorate
even though the number of edges in $T$ is effectively reduced. In the
following, therefore, we propose a $k$-approximation algorithm for BST,
called \textit{Branch Aware Edge Reduction Algorithm} (BAERA), to jointly
minimize the numbers of edges and branch nodes in $T$. As Theorem~\ref%
{hardness} proves that no $(k^{1-\epsilon })$-approximation algorithm for
any $\epsilon >0$ for the BST problem, BAERA achieves the best approximation
ratio. Due to space constraint, the pseudo code is presented in \cite{DBLP}.

BAERA includes two phases: 1) Edge Optimization Phase and 2) Branch
Optimization Phase. In the first phase, BAERA iteratively chooses and adds a
terminal node in $K$ to the solution tree $T(V_{T},E_{T})$ for constructing
a basic BST, where $V_{T}$ and $E_{T}$ denote the nodes and edges currently
in $T$, respectively at each itereation. Initially, a random root node is added to $V_{T}$.
Afterwards, for each terminal node $v\in K$ that is not in $V_{T}$, BAERA
first finds the minimal distance $d_{v,T}$ from $v$ to $T$. Precisely, let $%
p_{v,u}$ denote the shortest path from $v$ to $u$ on the network $G$, and $%
\left \vert p_{v,u}\right \vert $ is the number of edges in $p_{v,u}$. The
minimal distance $d_{v,T}$ from $v$ to $T$ is $\min_{u\in V_{T}}\left \vert
p_{v,u}\right \vert $, and $u$ here represents the node closest to $v$ in $T$%
. After finding $d_{v,T}$ for every $v$, BAERA extracts the node $v_{\min }$
with the smallest $d_{v,T}$, i.e., $v_{\min }=\arg \min_{v\in
K-V_{T}}d_{v,T} $ and adds $p_{v,u}$ to $T$.\footnote{%
In this paper, we connect $v_{\min }$ to $T$ via the shortest path.
Nevertheless, it is also allowed to connect $v_{\min }$ to $T$ with an
alternate path derived according to unicast traffic engineering \cite%
{Agarwal2013} to meet the unicast traffic requirements.} Most importantly,
to avoid constantly generating a new branch node, BAERA will choose $p_{v_{\min },u}$,
i.e., let the node $v_{\min }$ connect to $u$ which already acted as a branch node in $%
T$, if there are multiple $v_{\min }$ sharing the same minimal distance $%
d_{v_{\min },T}$. Edge Optimization Phase ends when all nodes in $K$ are
added to $V_{T}$.

Fig. 2(a) presents an example of Edge Optimization Phase, where node 1 is
the root. Node 2 is first connected to node 1 with 2 edges via node $%
d $. Node 3 is then connected to $d$ with 3 edges via nodes $b$ and $a$%
. Node 4 is then connected to $b$ with 2 edges via $c$. Afterwards,
node 5 and node 6 are connected to $T$ sequentially. For node 7
and node 8, note that $d_{7,T}$ and $d_{8,T}$ are both 4 in Fig. 1(a), and considering $%
p_{8,y}$ will not generate another branch node, therefore node 8
is first connected to $T$ and then node 7 is connected to $T$ via the created branch node $s$.
Afterwards, node 9 and node 10 are connected to $T$ sequentially.

\begin{figure}[t]
\centering
\subfigure[Edge Optimization Phase]{
    \label{fig2:subfig:a}
    \includegraphics[scale=0.50]{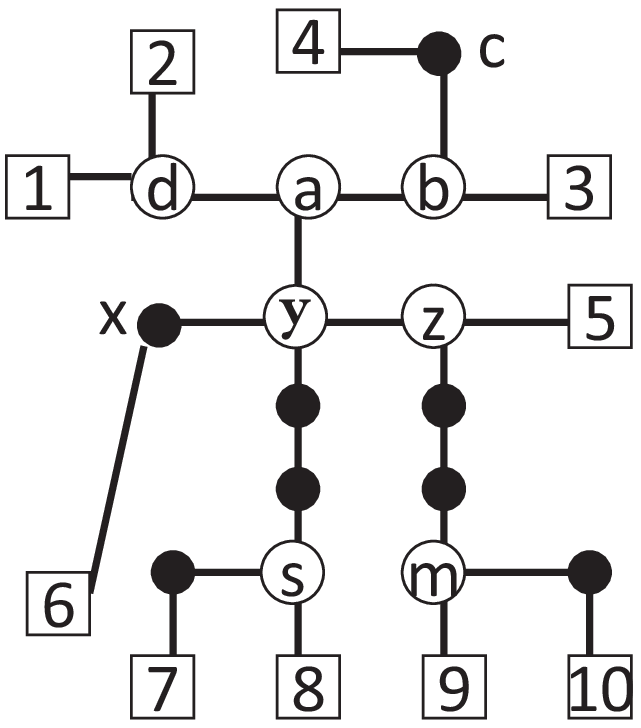}}
\smallskip
\subfigure[Deletion Step]{
    \label{fig2:subfig:b}
    \includegraphics[scale=0.50]{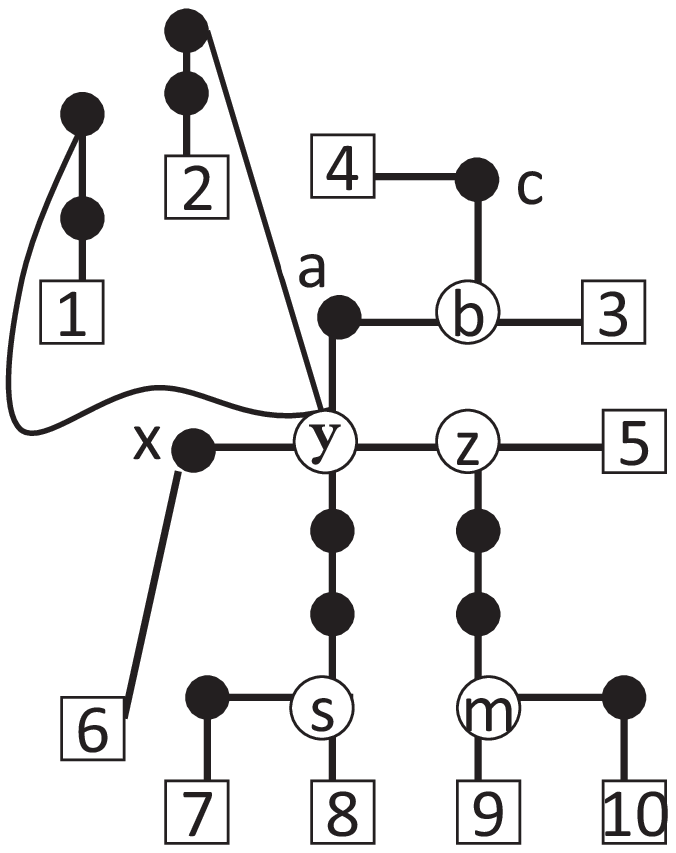}}
\caption{An example of BAERA (refer to Fig.1)}
\label{fig2:subfig}
\end{figure}

Afterwards, Branch Optimization Phase re-routes the tree $T$ to reduce the
number of branch nodes. Intuitively, if more branch nodes are allowed in $T$%
, the nodes in $K$ can connect to $T$ with shorter paths, as
the plan in Edge Optimization Phase. Nevertheless, as the weight $w$ of a
branch node increases, it is necessary for a terminal node to pursue a
longer path that directly connects to an existing branch node in $T$ to
avoid creating a new branch node. To address this issue, Branch Optimization
Phase includes two steps: 1) Deletion Step and 2) Alternation Step. Deletion
Step first tries to remove some branch nodes in $T$ obtained from Edge
Optimization Phase, and then Alternation Step tries to iteratively move each of remaining branch nodes to its neighbor node. In the above two steps, the
solution $T$ will be replaced by the new one only if its objective value $%
c(T)+b(T)w$ is improved (i.e., reduced).

More specifically, Deletion Step first sorts the branch nodes by the ascending order of the
degree in $T$. In other words, a branch node owning fewer neighbor branch
nodes and neighbor terminal nodes\footnote{%
Herein, the examples for the neighbor terminal node and neighbor branch node are presented. In Fig. 2(b), node 2 is a neighbor terminal node of $y$ because there is
no other branch node or terminal node between them, while node 4 is not the neighbor terminal node of $y$. Node $b$ is a neighbor branch node of $y$, but is not a neighbor branch node of $s$.}
will be examined first because the solution has a higher chance to be
improved.
When a branch node $v_{d}$ is removed, because $T$ is partitioned into multiple connected components, $v_{d}$'s neighbor branch node and neighbor terminal node will correspond to different
connected components.
Deletion Step will re-route $v$ to the $v$'s closest branch node $u$ in another connected component via its shortest path $p_{v,u}$ to merge the two
connected components.\footnote{%
If a cycle is created by adding $p_{v,u}$, the longest path between two
neighbor branch nodes in the cycle can be removed.} This process is repeated
such that different connected components will be connected together to
create a new tree.
Fig. 2(b) presents an example of deleting branch node $d$ from Fig. 2(a).
After $d$ is deleted, node 1 and node 2 are re-routed to the other connected component's node $a$ via node $y$. Therefore, the number of
branch nodes can be reduced when Deletion Step ends.

Afterwards, Alternation Step sorts the branch nodes in the ascending order
of the degree again. This step tries to move each branch node $v_{a}$ to a
neighbor node $v_{n}$. For each neighbor branch node or neighbor terminal
node $v$ of $v_{a}$, $p_{v,v_{a}}$ is replaced by $p_{v,v_{n}}$.\footnote{%
Any cycle created by adding $p_{v,v_{n}}$ is also necessary to be removed.}
This step will choose the neighbor node $v_{n}$ leading to the most
reduction on the objective value $c(T)+b(T)w$, and each branch nodes can be
moved multiple times until no neighbor node is able to reduce the objective
value. The difference between Alternation Step and Deletion Step is that
here every $v$ in different connected component will connect to the same
node (i.e., $v_{n}$), leading to a chance on the reduction of the edge
number. {Fig. 1(d) presents the result of altering branch node }$b$ to its
neighbor $c$ in Fig. 2(b). Paths $p_{y,b}$ and $p_{3,b}$ are replaced by
paths $p_{y,c}$ and $p_{3,c}$ with $c(T)$ reduced by $1$.

In the following, we prove that BAERA with the above two phases is a $k$%
-approximation algorithm if the optimal solution includes at least one
branch node. On the other hands, when the optimal solution has no branch
node, it will become a path, instead of a tree.  We will discuss this case later.

\begin{theorem}
BAERA is a $k$-approximation algorithm for the BST problem.
\label{thm:k_ratio}
\end{theorem}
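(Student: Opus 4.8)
The plan is to bound the cost of BAERA's output tree $T$ against the optimal BST $T^*$ by separately controlling the edge count $c(T)$ and the branch-node count $b(T)$. Since the objective is $c(T)+b(T)w$, and we assume $T^*$ contains at least one branch node (so $b(T^*)\ge 1$), I would try to show $c(T)\le k\cdot c(T^*)$ and $b(T)\le k\cdot b(T^*)$, which together yield $c(T)+b(T)w\le k\,(c(T^*)+b(T^*)w)$. The factor of $k$ is plausible because the output tree connects $k$ terminals, and each terminal's connecting path in the Edge Optimization Phase is a shortest path to $T$.

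For the edge bound, I would first observe that the optimal tree $T^*$, being connected and spanning all $k$ terminals, has $c(T^*)\ge$ the cost of connecting the terminals, and in particular contains a path between any pair of terminals whose length is at least their shortest-path distance. The Edge Optimization Phase adds each terminal $v_{\min}$ via a shortest path $p_{v_{\min},u}$ of length $d_{v_{\min},T}$, and because it always extracts the \emph{globally} nearest unconnected terminal, the accumulated edge cost can be compared to a minimum spanning tree on the terminals in the metric closure of $G$. The classical Steiner-tree heuristic argument (as in the Takahashi--Matsuyama approach cited as \cite{Takahashi1980}) gives that the total length of these shortest paths is at most roughly $2$ times the optimal Steiner cost; I would use the cruder bound that each of the $k$ added paths has length at most $c(T^*)$, since any such path is a shortest path and $T^*$ itself connects everything, giving $c(T)\le k\cdot c(T^*)$ directly and cleanly.

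For the branch bound, the key structural fact is that a tree spanning $k$ terminals can have at most $k-2$ branch nodes (in fact at most $k-1$, and this is where the leaves being terminals matters), while $b(T^*)\ge 1$ by assumption; hence $b(T)\le k-1\le (k-1)\,b(T^*)\le k\cdot b(T^*)$. I would justify $b(T)\le k-1$ by the standard degree-sum argument on trees: in the final tree every leaf is a terminal, the number of leaves is at most $k$, and the number of branch (degree-$\ge 3$) nodes is bounded by the number of leaves minus one. The Branch Optimization Phase only replaces $T$ when the objective strictly improves, so it never increases $b(T)$ beyond what the Edge Optimization Phase produced, and in particular the leaf-count bound is preserved.

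The main obstacle I anticipate is the edge bound: the naive claim that each added shortest path has length at most $c(T^*)$ is safe but the summation to $k\cdot c(T^*)$ must be argued carefully, because after the first few terminals are attached the distance $d_{v,T}$ to the \emph{growing} tree $T$ can only be smaller than the distance to the root alone, so the bound is not tight but remains valid. The more delicate point is ensuring the Branch Optimization Phase, which re-routes via possibly longer paths to eliminate branch nodes, does not blow up $c(T)$ beyond the $k\cdot c(T^*)$ envelope; since that phase accepts a change only when $c(T)+b(T)w$ decreases, the final objective is no larger than after Edge Optimization, so bounding the post-Edge-Optimization objective suffices and the re-routing phase needs no separate edge analysis. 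I would therefore concentrate the rigor on establishing the two bounds at the end of the Edge Optimization Phase and then invoke the monotone-improvement property of Branch Optimization to conclude.
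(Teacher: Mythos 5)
Your proposal is correct and follows essentially the same route as the paper's proof: bound $c(T)\le k\,c(T^*)$ by comparing each terminal's shortest connecting path against the corresponding path to the root inside $T^*$, bound $b(T)\le k\,b(T^*)$ from $b(T)\le k$ together with the standing assumption $b(T^*)\ge 1$, and dispose of the Branch Optimization Phase by noting it only ever decreases the objective. The sole cosmetic difference is that you justify $b(T)\le k$ by counting leaves, whereas the paper observes that each iteration of the Edge Optimization Phase creates at most one new branch node; both arguments are valid.
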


\begin{proof}
In Edge Optimization Phase, since $T$ is constructed by adding shortest
paths to $T$, $c(T)=\sum_{v\in K}d_{v,T}$ as explained early in this
section. Because $d_{v,T}=\min_{u\in V_{T}}\left \vert p_{v,u}\right \vert $
and the root node $r\in V_{T}$, $d_{v,T}\leq d_{v,r}$, where $d_{v,r}$ is
the number of edges in the shortest path from $v$ to $r$. Let $T^{\ast }$
denote the optimal BST, and $d_{v,r}^{\ast }$ denote number of edges in the
path from $v$ to $r$ on $T^{\ast }$, which may not be the shortest path
between $v$ and $r$ in $G$. In other words, $d_{v,r}\leq d_{v,r}^{\ast }$.
Apparently, $d_{v,r}^{\ast }\leq c(T^{\ast })$, and thus we conclude that $%
c(T)=\sum_{v\in K}d_{v,T}\leq \sum_{v\in K}d_{v,r}\leq \sum_{v\in
K}d_{v,r}^{\ast }\leq k\ast c(T^{\ast })$ after the first phase ends. On the
other hand, $T$ cannot have more than $k$ branch nodes because each step in
this phase creates at most one branch node. Therefore, $b(T)\leq k\ast
b(T^{\ast })$ since $b(T^{\ast })\geq 1$, and the tree $T$ generated in the
first phase is $k$-approximated. Since the second phase re-routes the tree
only if the objective value $c(T)+b(T)w$ can be reduced, the tree $T$
outputed in the second phase is also $k$-approximated. The theorem follows.
\end{proof}

In the following, we discuss the cases when  the optimal solution has no branch node, i.e., the optimal solution is a path, instead of a tree. Let $P^{\ast }$ denote the optimal BST.

\begin{proposition}
If $w\leq k$, then BAERA is a $2k$-approximation algorithm.
\label{thm:2k_approx}
\end{proposition}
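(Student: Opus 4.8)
The plan is to analyze what happens when the optimal solution $P^{\ast}$ is a path (so $b(P^{\ast})=0$) and bound BAERA's output against it. First I would observe that since $P^{\ast}$ is a path spanning all $k$ terminal nodes together with possibly some Steiner nodes, its cost is purely edge cost: $c(P^{\ast})+b(P^{\ast})w = c(P^{\ast})$. The key quantity to control is therefore $c(P^{\ast})$, the number of edges on this optimal path.

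The main idea is to reuse the edge-cost argument from the proof of Theorem~\ref{thm:k_ratio} almost verbatim, because that argument never actually needed $P^{\ast}$ to contain a branch node --- it only used that the distance $d_{v,r}^{\ast}$ from each terminal $v$ to the root $r$ along the optimal structure is at most the total edge cost of the optimal structure. So I would argue $c(T) = \sum_{v\in K} d_{v,T} \le \sum_{v\in K} d_{v,r} \le \sum_{v\in K} d_{v,r}^{\ast} \le k\cdot c(P^{\ast})$, where $T$ is BAERA's tree after the Edge Optimization Phase. This handles the edge count. For the branch nodes, I would again use that each iteration of the first phase creates at most one branch node, so $b(T) \le k$. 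The objective value of $T$ is then $c(T) + b(T)w \le k\cdot c(P^{\ast}) + kw$.

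Now the hypothesis $w \le k$ enters. Since $P^{\ast}$ is a path spanning $k$ terminals, it has at least $k-1$ edges, so $c(P^{\ast}) \ge k-1 \ge 1$ (assuming at least two terminals); more usefully I would want $c(P^{\ast}) \ge k-1$, which lets me write $kw \le k\cdot k \le $ something comparable to $k\cdot c(P^{\ast})$. Concretely, I expect to bound $kw \le k\cdot c(P^{\ast}) + (\text{lower-order})$ or directly $kw \le k \cdot k \le \frac{k^2}{k-1}\,c(P^{\ast})$, and then combine to reach a total of at most $2k\cdot c(P^{\ast}) = 2k(c(P^{\ast}) + b(P^{\ast})w)$, giving the claimed $2k$ ratio. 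The cleanest route is: the optimal value is $\mathrm{OPT} = c(P^{\ast}) \ge k-1$, BAERA's value is at most $k\cdot c(P^{\ast}) + kw \le k\cdot \mathrm{OPT} + k\cdot k$, and since $\mathrm{OPT}\ge k-1$ the additive $k^2$ term is at most roughly $k\cdot \mathrm{OPT}$, yielding $2k\cdot \mathrm{OPT}$.

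The step I expect to be the main obstacle is making the arithmetic clean at the boundary where $k$ is small, since $c(P^{\ast}) \ge k-1$ degrades for $k=1$ or $k=2$; I would handle the trivial small-$k$ cases separately and note that BAERA does not create any branch node when it connects terminals along a single path, so in the genuinely path-like optimal regime $b(T)$ may itself be zero and the bound tightens. I would also double-check that the Branch Optimization Phase only ever decreases the objective (stated in the text), so the final output inherits the $2k$ bound established for the end of the first phase.
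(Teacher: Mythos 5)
Your proposal follows the paper's proof essentially verbatim: bound $c(T)\le k\,c(P^{\ast})$ by reusing the edge argument of Theorem~\ref{thm:k_ratio}, bound the branch-node term using $w\le k$ together with $c(P^{\ast})\ge k-1$, and note that Branch Optimization Phase only ever decreases the objective. The one loose point is your count $b(T)\le k$, which gives $b(T)w\le k^{2}=\frac{k}{k-1}\cdot k\,c(P^{\ast})$ and hence a ratio slightly above $2k$ for every $k$; the paper closes this by counting more carefully --- the first phase performs only $k-1$ attachments and the first attachment cannot create a branch node, so $b(T)\le k-2\le c(P^{\ast})$ and therefore $b(T)w\le k\,c(P^{\ast})$ exactly, yielding the clean $2k$ bound without any separate small-$k$ case.
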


\begin{proof}
Denote $T$ as the tree generated by BAERA. First, since $T$ is constructed by adding shortest paths to $T$, we know that $c(T) \leq k \times c(P^*)$ according to Theorem~\ref{thm:k_ratio}. Second, since BAERA includes at most one additional branch node in each iteration, there are at most $k-2$ branch nodes in $T$, i.e., $k-2 \leq b(T)$.  In addition, since $P^*$ connects all terminals in $K$, the number of edges in $P^*$ must be at least $k - 1$, i.e., $k-1 \leq c(P^*)$. Thus, we obtain $b(T)w \leq (k-2)w \leq c(P^*) \times k$. Therefore, $c(T)+b(T)w \leq 2k \times c(P^*)$ and BAERA is a $2k$-approximation algorithm when $w \leq k$. The theorem follows.
\end{proof}

%

BAERA tends to generate a solution with branch nodes. For the case with a large w, we explore another direction that leverages the Hamiltonian path to find a solution with the performance guarantee. In the following, we first introduce the Ore's Theorem~\cite{Ore1960}, and then prove that an Hamiltonian path must exist if the degree of selected nodes are large enough in  Proposition~\ref{thm:no_branch}.

\begin{theorem}[Ore's Theorem]
Let $G=(V,E)$ be a connected simple graph with $n\geq 3$ vertices.
If for each pair of non-adjacent vertices
$u,v\in V$ such that $\mathrm{deg}(u)+\mathrm{deg}(v)\geq n$, $G$ contains a Hamiltonian cycle.
\end{theorem}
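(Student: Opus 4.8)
The plan is to argue by contradiction, combining a maximality (closure-style) reduction with a pigeonhole ``crossing'' construction. Suppose $G$ satisfies the hypothesis $\mathrm{deg}(u)+\mathrm{deg}(v)\geq n$ for every pair of non-adjacent vertices $u,v$, yet $G$ has no Hamiltonian cycle. First I would pass to a maximal counterexample: keep adding edges to $G$ on the same vertex set, as long as the graph stays simple and remains non-Hamiltonian. Since $n\geq 3$ guarantees that the complete graph $K_{n}$ is Hamiltonian, this process must halt at some proper subgraph $G'$ of $K_{n}$; thus $G'$ is non-Hamiltonian, but adding any missing edge creates a Hamiltonian cycle. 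Adding edges only increases degrees and only shrinks the set of non-adjacent pairs, so $G'$ still satisfies Ore's condition.

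Next I would extract a Hamiltonian \emph{path} in $G'$. Because $G'$ is a proper subgraph of $K_{n}$, there is at least one pair of non-adjacent vertices $u,v$. By the maximality of $G'$, the graph $G'+uv$ is Hamiltonian, and every Hamiltonian cycle of $G'+uv$ must use the new edge $uv$ (otherwise that cycle would already lie in $G'$, contradicting its non-Hamiltonicity). Deleting $uv$ from such a cycle leaves a Hamiltonian path $v_{1}v_{2}\cdots v_{n}$ of $G'$ with $v_{1}=u$ and $v_{n}=v$.

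The heart of the argument, and the step I expect to be the main obstacle, is turning this path into a cycle using only the degree bound. I would define two index sets inside $\{1,\dots,n-1\}$: the set $S=\{\,i : v_{1}v_{i+1}\in E(G')\,\}$, recording the successors along the path of the neighbors of $u$, and $T=\{\,i : v_{i}v_{n}\in E(G')\,\}$, recording the neighbors of $v$. Every neighbor of $u=v_{1}$ lies among $v_{2},\dots,v_{n}$ and every neighbor of $v=v_{n}$ lies among $v_{1},\dots,v_{n-1}$, so $|S|=\mathrm{deg}(u)$ and $|T|=\mathrm{deg}(v)$. Since $u$ and $v$ are non-adjacent, Ore's condition yields $|S|+|T|=\mathrm{deg}(u)+\mathrm{deg}(v)\geq n>n-1=|\{1,\dots,n-1\}|$, so by pigeonhole $S\cap T\neq\emptyset$.

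Finally, for any $i\in S\cap T$ I have both $v_{1}v_{i+1}\in E(G')$ and $v_{i}v_{n}\in E(G')$, which lets me reroute the path into the Hamiltonian cycle $v_{1}v_{2}\cdots v_{i}\,v_{n}v_{n-1}\cdots v_{i+1}\,v_{1}$: it follows the path up to $v_{i}$, jumps to $v_{n}$, runs the path backwards to $v_{i+1}$, and closes via $v_{i+1}v_{1}$. This contradicts the non-Hamiltonicity of $G'$, completing the proof. The delicate points to verify are that $S$ and $T$ genuinely live in a universe of size $n-1$, that the non-adjacency of $u$ and $v$ rules out the degenerate endpoint indices $i=1$ and $i=n-1$ (so the reconnecting chords are real edges), and that the rerouted walk visits every vertex exactly once; once the crossing index $i$ is found, the cycle is forced.
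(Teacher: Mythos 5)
Your proof is correct: the reduction to an edge-maximal non-Hamiltonian graph, the extraction of a Hamiltonian path between a non-adjacent pair, and the pigeonhole crossing-chord argument (with the observation that non-adjacency of $u$ and $v$ excludes the degenerate indices $i=1$ and $i=n-1$) together constitute the standard, complete proof of Ore's theorem. Note that the paper itself does not prove this statement at all --- it is quoted as a classical result with a citation to Ore (1960) and used as a black box in Proposition 2 --- so there is no in-paper argument to compare against; your write-up simply supplies the textbook proof of the cited theorem.
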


%

\begin{proposition}
Assume there exists a connected subgraph $H$ of $G$ and for each pair of non-adjacent vertices $u, v\in V(H)$, $\mathrm{deg}_{H}(u) + \mathrm{deg}_{H}(v)\geq |V(H)|$.
If $K\subseteq V(H)$ and $|V(H)|\leq (k-1)k$, then we can find a Hamiltonian
path $P$ with $\frac{c(P)}{c(P^{\star })}\leq k$.
\label{thm:no_branch}
\end{proposition}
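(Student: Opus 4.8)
The plan is to use Ore's Theorem to extract a Hamiltonian cycle from $H$, truncate it into a Hamiltonian path that is itself a feasible branch-free BST solution, and then pit its edge count against $c(P^{\star})$ using the two size facts $|V(H)|\leq (k-1)k$ and $|K|=k$. Since in this regime both $P$ and $P^{\star}$ are paths (hence $b(P)=b(P^{\star})=0$), the objective ratio coincides with the edge ratio, so it suffices to control $c(P)/c(P^{\star})$.

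First I would check that the hypotheses of Ore's Theorem are met by $H$: it is a connected simple graph, and by assumption every pair of non-adjacent vertices $u,v\in V(H)$ satisfies $\mathrm{deg}_{H}(u)+\mathrm{deg}_{H}(v)\geq |V(H)|$. For $|V(H)|\geq 3$, Ore's Theorem then yields a Hamiltonian cycle $C$ in $H$ (the degenerate cases $|V(H)|<3$ are handled by hand, since a connected graph on one or two vertices trivially admits a spanning path). Deleting any single edge of $C$ produces a Hamiltonian path $P$ of $H$. Because $H$ is a subgraph of $G$, every edge of $P$ is an edge of $G$, so $P$ is a genuine simple path in $G$ that visits each vertex of $V(H)$ exactly once; as $K\subseteq V(H)$, the path $P$ connects all terminals. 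Being a path, $P$ has no vertex of degree at least $3$, so $b(P)=0$ and $P$ is a legitimate branch-free competitor directly comparable with $P^{\star}$.

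Next I would carry out the two counting estimates. A Hamiltonian path on $|V(H)|$ vertices has exactly $|V(H)|-1$ edges, so $c(P)=|V(H)|-1\leq (k-1)k-1$, which is exactly where the hypothesis $|V(H)|\leq (k-1)k$ is consumed. For the denominator, since $P^{\star}$ connects the $k$ terminals of $K$ it must contain at least $k-1$ edges, giving $c(P^{\star})\geq k-1$. Combining the two bounds yields
\[
\frac{c(P)}{c(P^{\star})}\leq \frac{(k-1)k-1}{k-1}=k-\frac{1}{k-1}<k,
\]
which establishes the claimed ratio.

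I do not expect a serious technical obstacle in the core inequality, which is just a chain of two elementary estimates once the cycle is in hand. The step that deserves the most care is verifying that $P$ is an \emph{admissible} and \emph{comparable} solution: one must confirm that the Hamiltonian path, which necessarily traverses \emph{all} of $V(H)$ rather than only the terminals, is still a valid branch-free BST solution and that these extra non-terminal vertices do not push the edge count past the $|V(H)|-1$ budget. A secondary subtlety is the small-$H$ boundary of Ore's Theorem, which must be dispatched separately.
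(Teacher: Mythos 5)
Your proof is correct and follows essentially the same route as the paper's: invoke Ore's Theorem to obtain a Hamiltonian cycle in $H$, cut it into a Hamiltonian path $P$ with $b(P)=0$, bound $c(P)$ by $|V(H)|-1\leq (k-1)k-1$, and divide by $c(P^{\star})\geq k-1$. The only cosmetic difference is that the paper trims the cycle so that the path's endpoints lie in $K$, whereas you delete an arbitrary edge; both yield the claimed ratio.
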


\begin{proof}
In the following, we discuss the case when $|V(H)|\geq 3$.\footnote{Note that a connected $H$ with $|V(H)|=2$ contains two nodes and a link between them, and the Hamiltonian path can be easily derived.} Since $H$ is connected and for each pair of non-adjacent vertices $u, v\in V(H)$, $\mathrm{deg}_{H}(u) + \mathrm{deg}_{H}(v) \geq |V(H)|$ holds, $H$ has a Hamiltonian cycle $C$ according to Ore's Theorem. Thus we can find a path $P$ from $C$ such that the start node and end node are in $K$ and connects all terminal nodes which satisfies $c(P) \leq |V(H)|\leq (k-1)k$. Since $c(P^{*})\geq (k-1)$ (as mentioned in Proposition~\ref{thm:2k_approx}), we obtain $\frac{c(P)}{c(P^{*})}\leq \frac{(k-1)k}{k-1}=k$. The theorem follows.
\end{proof}

Note that those $H$ can be obtained by examining dense subgraphs \cite{Khuller09ICALP} or $k$-cores \cite{Seidman1983}. Moreover, if such $H$ exists, the corresponding Hamiltonian Path $P$ could be derived by an existing algorithm \cite{Palmer1997}.

\textbf{Time Complexity. }We first find the shortest path between any two
nodes in $G$ with Johnson's algorithm in $O(|V||E|+|V|^{2}\log |V|))$ time
as a pre-processing procedure for quickly lookup afterwards. The advantage is
that the preprocessing only needs to be performed once but can be exploited
during the construction of all BSTs afterwards. In each iteration of Edge Optimization Phase, BAERA finds $%
d_{v,T} $ and extracts $v_{\min }$ in $O(k|V|)$ time, and this phase
requires $O(k^{2}|V|)$ time to connect all terminal nodes to $T$.

In Branch Optimization Phase, let $B$ denote the set of branch nodes in $T$.
Let $\delta _{T}$ denote the maximal degree of a node in $T$, and $\delta
_{T}\leq k$ and $\delta _{T}\leq \delta _{G}$ must hold, where $\delta _{G}$
is the maximal degree of a node in $G$. Deletion Step first sorts the branch
nodes in the ascending order of the degree in $T$. Since $|B|\leq k-2$, the
sorting requires $O(k\log k)$ time. We then build a heap for each branch node to store the
shortest-path distance from other branch nodes to $v$ in $O(k\log k)$ time. To
remove a branch node $v_{d}$, it is necessary to connect each neighbor
branch node and neighbor terminal node $v$ to the existing closest branch
node $u$ in $T$ in $O(\log k)$ time. Therefore, Deletion Step takes $%
O(\delta _{T}\log k)$ time to delete a branch node, and thus $O(k\delta _{T}\log k)$ for trying to delete all branch nodes. In Alternation Step, first the branch nodes are sorted in $O(k\log k)$ time.
Then, BAERA tries to move each branch node in order. Note that each branch node $v_{a}$ can be moved at most $O(|V|)$ times, and moving $v_{a}$ to a neighbor takes $O(\delta _{T})$ time. Alternation Step takes $O(k\log k+k\delta _{T}|V|)$ time. Therefore, the time
complexity of Branch Optimization Phase is $O(k\log k+k\delta _{T}|V|)$,
and BAERA takes $O(k^{2}|V|+k\delta _{T}|V|)$ time after the
pre-processing procedure. As shown in Section V later, $\delta _{T}$ is
usually small, and thus the time complexity of BAERA after pre-processing is $O(k^{2}|V|)$. Moreover, $|V|$ in the above analysis represents an upper bound of the cost for scanning the tree $T$. Since the tree size is usually much smaller than $|V|$, the computation cost is actually close to $O(k^2|T|)$.

\begin{algorithm}[t]
\caption{Branch Aware Edge Reduction Algorithm (BAERA)}
\label{alg}
\begin{algorithmic}[1]
\REQUIRE {A network $G=(V,E)$, a nonnegative value $w$ and a terminal set $K$.}
\ENSURE {A Steiner tree $T$.}
\STATE{//Edge Optimization Phase}
\STATE{Choose a terminal node $r$ as the root}
\STATE{$T \leftarrow \{r\}$, $K \leftarrow K-\{r\}$, $A(T)\leftarrow 0$}
\WHILE{$K\neq \emptyset$}
	\FOR{$v\in K$}
		\STATE{$d_{v,T}\leftarrow$ the minimum distance from $v$ to $T$}
		\STATE{$p_{v,T}\leftarrow$ the shortest path from $v$ to $T$}
	\ENDFOR
	\STATE{$S \leftarrow \{x|~ d_{x,T}=\min_{v\in K}d_{v,T}\}$}
	\IF{ there exists a $x\in S$ such that $T\cup p_{x,T}$ does not generate a new branch node}
		\STATE{$T \leftarrow T\cup p_{x,T}$}
	\ELSE{}
		\STATE{Choose a $x\in S$ and $T \leftarrow T\cup p_{x,T}$} 	
	\ENDIF
	\STATE{$K\leftarrow K-\{x\}$}		
\ENDWHILE
\STATE{$A(T)\leftarrow c(T)+b(T)w$ //The weight of the tree $T$}
\STATE{}
\STATE{//Branch Optimization Phase 1) Deletion Step}
\STATE{Obtain an order $\sigma$ which sorts the branch nodes in the ascending order of the degree in $T$}
\FOR{$v_{d}\in \sigma$}
	\STATE{$T'\leftarrow T-\{v_{d}\}$}
	\FOR{neighbor branch node or neighbor terminal node $v$ of $v_{d}$}
		\STATE{Reroute the $v$'s closest branch node $u$ in another connected component via its shortest path $p_{v,u}$}
		\STATE{$T'\leftarrow T'\cup p_{v,u}$}
	\ENDFOR
	\IF{$c(T')+b(T')w < A(T)$}
		\STATE{$T \leftarrow T'$ and $A(T) \leftarrow c(T')+b(T')w$}
	\ENDIF
\ENDFOR
\STATE{}
\STATE{//Branch Optimization Phase 2) Alternation Step}
\STATE{Obtain an order $\sigma$ which sorts the branch nodes in the ascending order of the degree in $T$}
\FOR{$v_{a}\in \sigma$}
	\STATE{$T'\leftarrow T-\{v_{a}\}$}
	\STATE{Choose a neighbor node $v_{n}$ of $v_{a}$}
	\FOR{neighbor branch node or neighbor terminal node $v$ of $v_{a}$}
		\STATE{The shortest path $p_{v,v_{a}}$ is replaced by the shortest path $p_{v,v_{n}}$}
		\STATE{$T'\leftarrow T'\cup p_{v,v_{n}}$}
	\ENDFOR
	\IF{$c(T')+b(T')w < A(T)$}
		\STATE{$T \leftarrow T'$ and $A(T) \leftarrow c(T')+b(T')w$}
	\ENDIF
\ENDFOR
\RETURN{$T$ and $A(T)$}	
\end{algorithmic}
\end{algorithm}

\section{Simulation Results}

In this section, we evaluate BAERA in both real networks and massive
synthetic networks.

\subsection{Simulation Setup}

The simulation is conducted in the following real networks \cite{Zoowebsite}: 1) the Uunet network with 49 nodes and 84 links, and
2) the Deltacom network with 113 nodes and 183 links. Many recent SDN works
\cite{Agarwal2013, Qazi2013} evaluate the proposed approaches in real
networks with at most hundreds of nodes. By contrast, we also evaluate our
algorithm in the networks generated by Inet \cite{Tangmunarunkit2002,
Inetwebsite} with tens of thousands of nodes to test the scalability of
BAERA. In our simulation, $K$ is chosen randomly from $G$.

We compare BAERA with the following algorithms: 1) the shortest-path tree
algorithm (SPT), 2) a Steiner tree (ST) algorithm \cite{Takahashi1980}, and 3)
Integer Programming solver CPLEX \cite{cplexwebsite}, which finds the
optimal solution of the BST problem by solving the Integer Programming
formulation in Section~\ref{subsec:integer_programming}. The performance
metrics include: 1) the objective value of the BST problem $b(T)w+c(T)$, 2)
the number of branch nodes in $T$, 3) the number of edges in $T$, and 4) the
running time. All algorithms are implemented in an HP DL580 server with four
Intel Xeon E7-4870 2.4 GHz CPUs and 128 GB RAM. Each simulation result is
averaged over 100 samples.

\subsection{Small Real Networks}

\begin{figure}[t]
\subfigure[Uunet
network]{\includegraphics[width=1.6in]{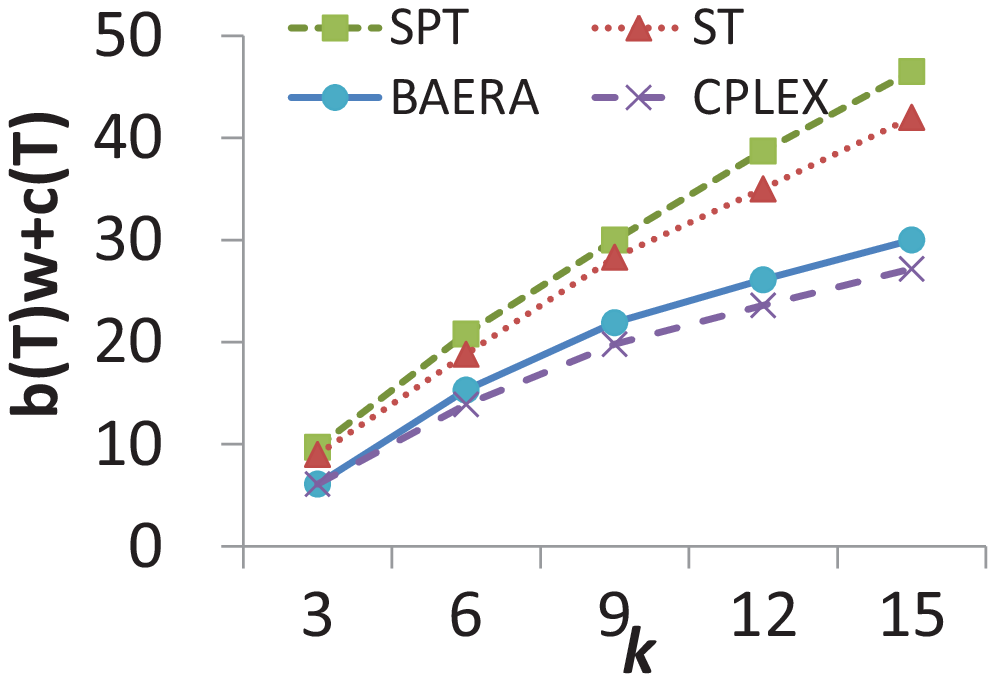}} %
\subfigure[Deltacom network]{\includegraphics[width=1.6
in]{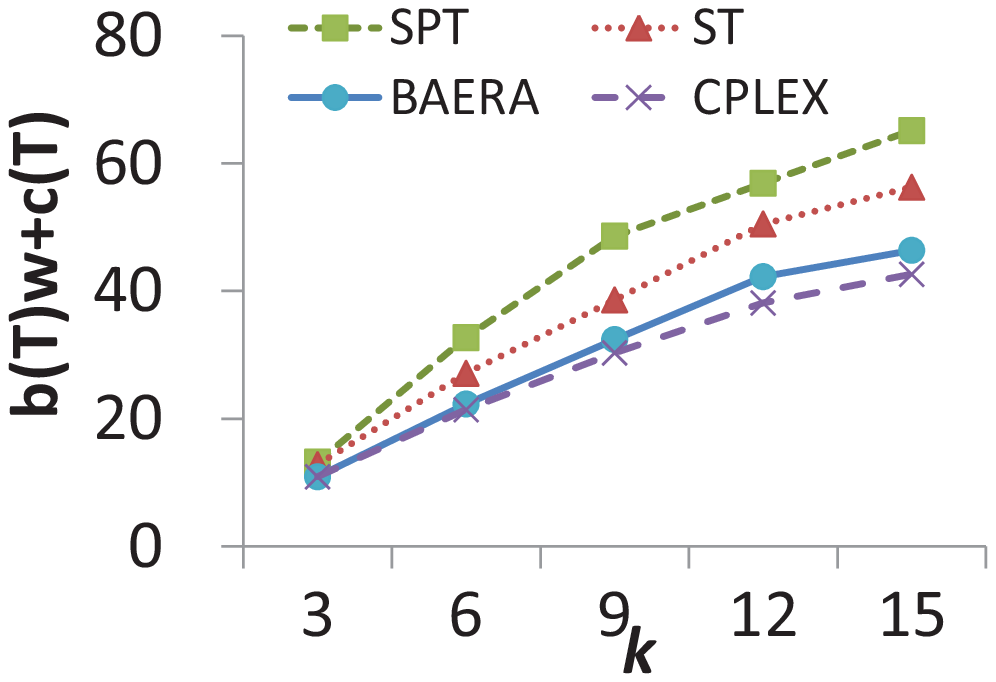}}
\caption{Varied $k$ in different real networks ($w=5$)}
\label{fig:comparing_with_cplex}
\end{figure}

In this subsection, we compare the performance of BAERA, ST and SPT with the
optimal solutions obtained by CPLEX under different $k$. Since the BST
problem is NP-Hard, CPLEX is able to find the optimal solutions for small
instances of the BST problem, and thus we only find the optimal solutions
for the Uunet and Deltacom networks. As shown in Fig.~\ref%
{fig:comparing_with_cplex}, the tree $T$ grows and includes more branch
nodes as $k$ increases, because a network is inclined to generate a large
tree. Nevertheless, BAERA outperforms SPT and ST in the two networks since
both the edge number and the branch node number are effectively minimized.
 In addition, the solutions of BAERA are very close to the optimal solutions.

\subsection{Large Synthetic Networks}

\begin{figure}[t]
\subfigure[Objective value in various $w$]
{\includegraphics[width=1.6in]{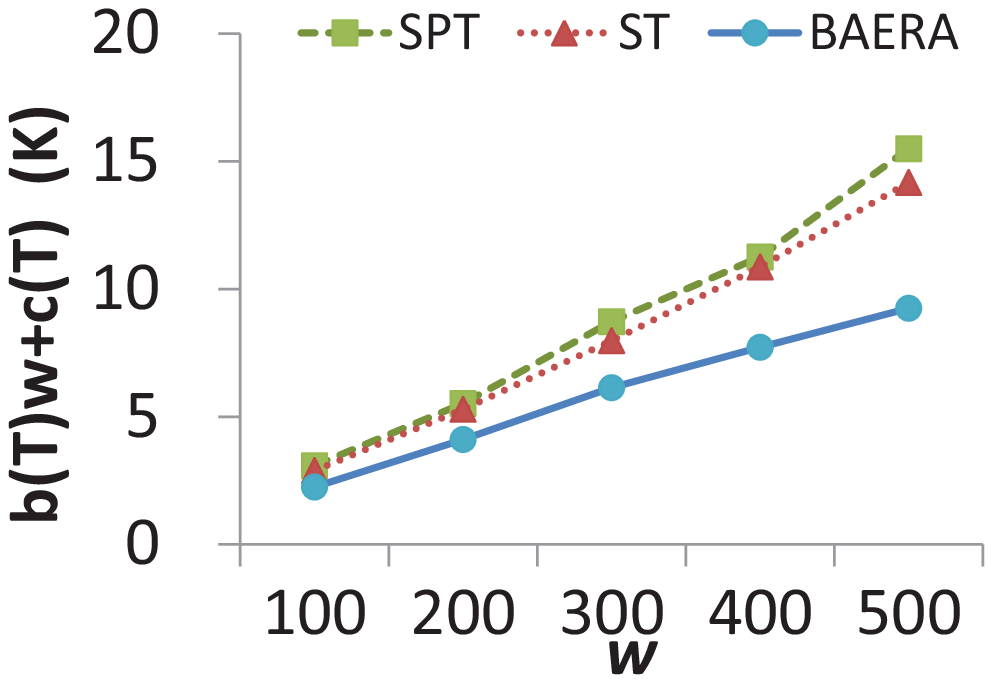}} \subfigure[Number of branch
nodes in various $w$]{\includegraphics[width=1.6 in]{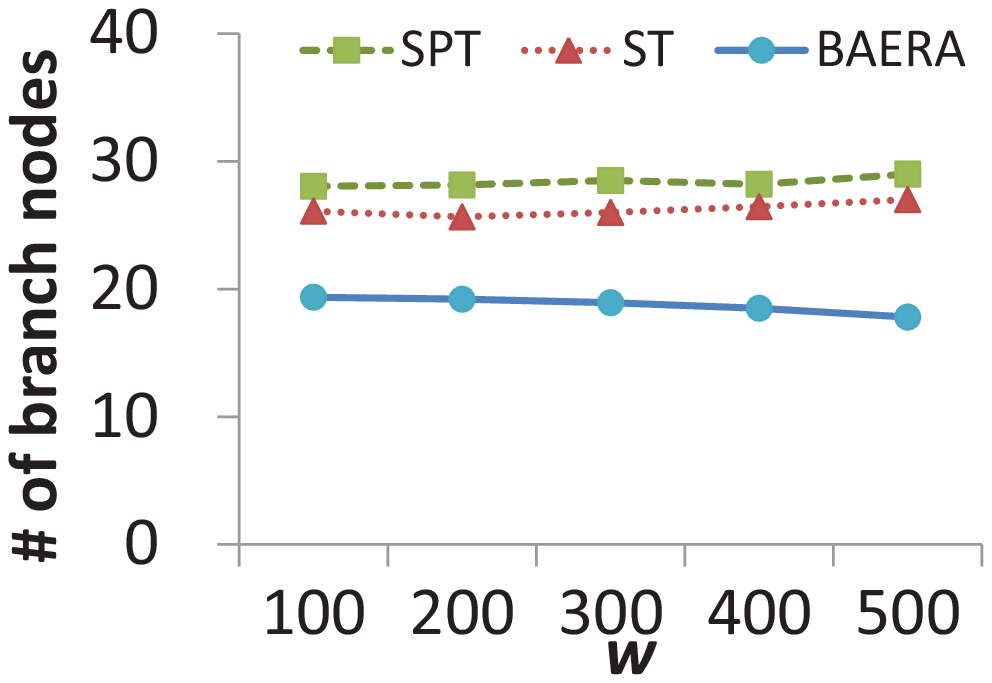}} \newline
\subfigure[Number of edges in various $w$]{\includegraphics[width=1.6
in]{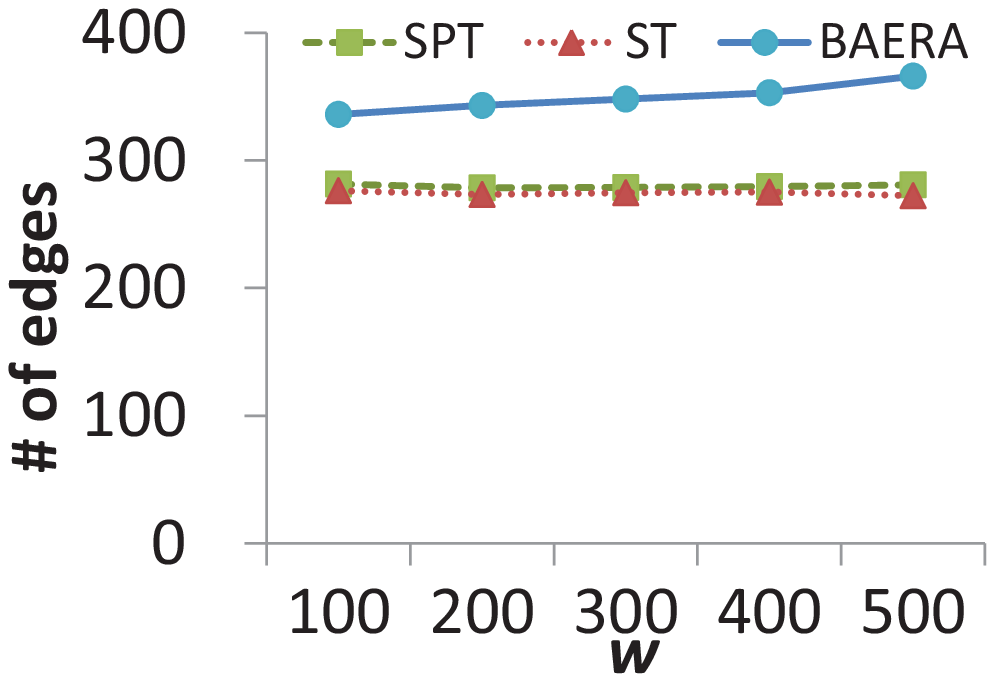}} \subfigure[Objective value in various
$k$]{\includegraphics[width=1.6 in]{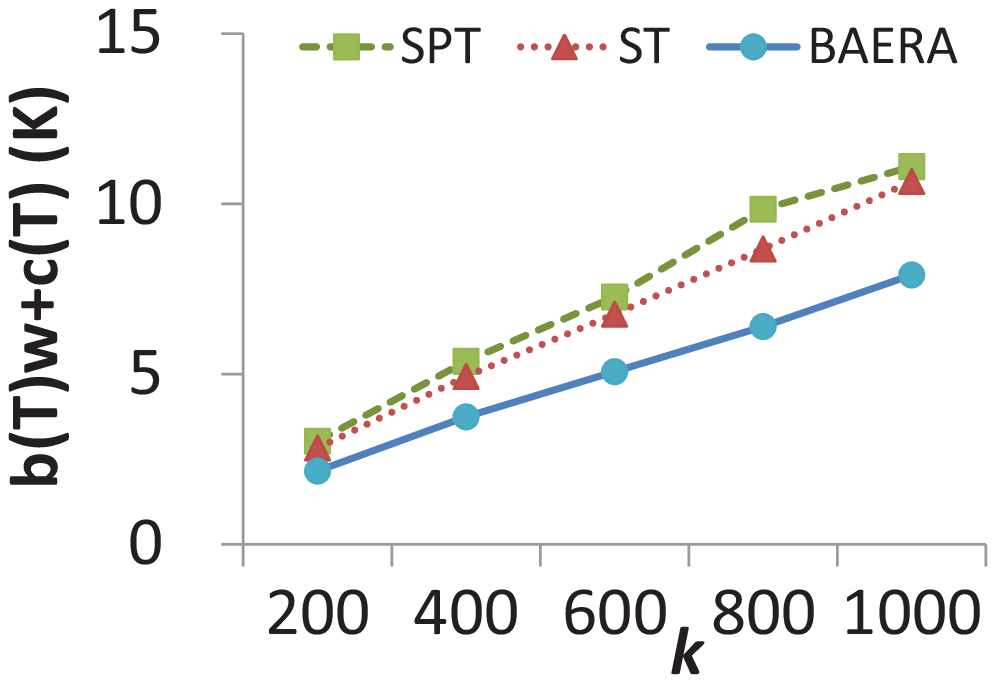}} \newline
\subfigure[Number of branch nodes in various $k$]{\includegraphics[width=1.6
in]{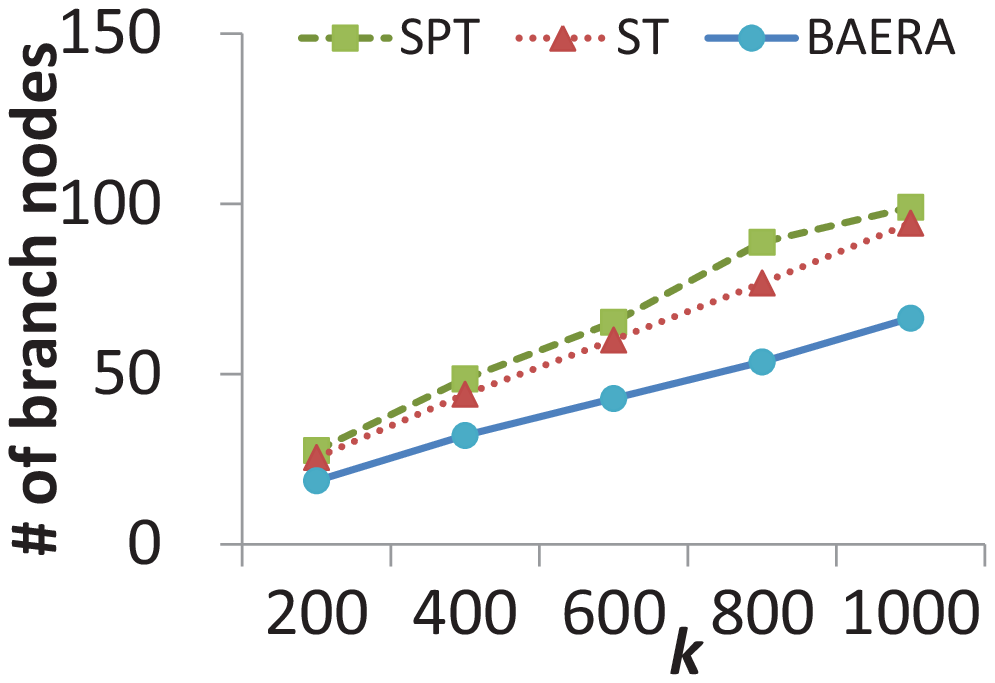}} \subfigure[Number of edges in various
$k$]{\includegraphics[width=1.6 in]{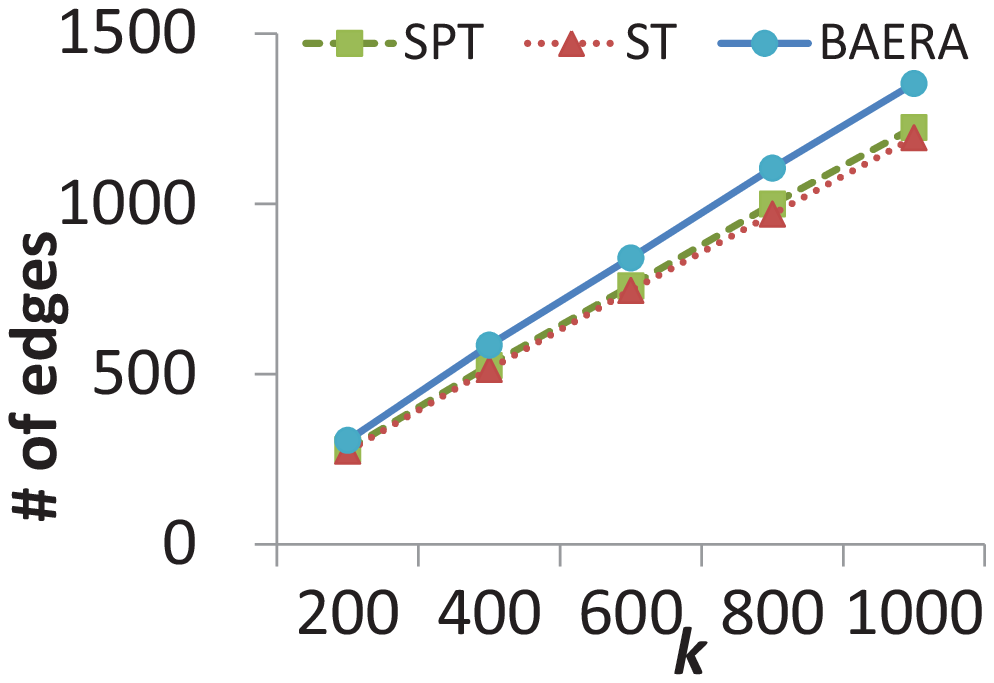}}
\caption{Varied $w$ and $k$ in the synthetic network by Inet}
\label{fig:synthetic_w_k}
\end{figure}

In the following, we evaluate BAERA, ST and SPT in large networks with 10000
nodes generated by Inet. Fig.~\ref{fig:synthetic_w_k}(a), Fig.~\ref%
{fig:synthetic_w_k}(b), and Fig.~\ref{fig:synthetic_w_k}(c) first discover
the impact of $w$ with $k$ as 200. Fig.~\ref{fig:synthetic_w_k}(a)
demonstrates that the objective value $b(T)w+c(T)$ increases as $w$ grows in
all algorithms. For a larger $w$, BST with BAERA can effectively limit the number of the created branch
nodes by slightly increasing more edges necessarily included to span all terminal nodes in $K$. Nevertheless, BAERA outperforms
SPT and ST, especially for a large $w$, because SPT and ST focus on only the
edge number and thus tend to create a tree with more branch nodes. By
contrast, the number of branch nodes in the solutions obtained by BAERA is
much smaller, but the edge number of BAERA is very close to ST.

Fig.~\ref{fig:synthetic_w_k}(d), Fig.~\ref{fig:synthetic_w_k}(e), and Fig.~%
\ref{fig:synthetic_w_k}(f) evaluate the impact of $k$ with $w$ as 100. As
shown in Fig.~\ref{fig:synthetic_w_k}(d), the objective value $b(T)w+c(T)$
becomes larger as $k$ increases, since more branch nodes are necessary to
participate in the tree. BAERA still requires fewer branch nodes from Fig.~\ref{fig:synthetic_w_k}(e). Moreover,
the increment of the objective value in BAERA grows slower than ST and BT,
showing that BAERA can further reduce the total cost in a larger $k$ with the proposed optimization methods.

Table I and Table II evaluate the running time of BAERA with various $k$ and
different Inet graph sizes. The running time of BAERA is too small to be
measured in the Uunet and Deltacom networks for arbitrary $k$. It
demonstrates that the running time of BAERA only slightly grows for a larger $%
k$, and most instances can be solved around 6 seconds when the network has
10000 nodes. In addition, for a smaller graph, ex. 4000 nodes, BAERA takes
only 1 second. Therefore, BAERA can both achieve a performance bound
(i.e., $k$-approximation) in theory and find a good solution with small time
in practice.

\begin{table}[t]
\caption{The running time of BAERA in different $k$ ($|V|$=10000)}%
\centering
\label{table:time_k}
\begin{tabular}{|c|c|c|c|c|}
\hline
$k$ & 100 & 200 & 300 & 400 \\ \hline
Running time (sec.) & 6.064 & 6.418 & 6.816 & 7.430 \\ \hline
\end{tabular}%
\end{table}

\begin{table}[t]
\caption{The running time of BAERA in different graph sizes ($k=200$)}%
\centering
\label{table:time_v}
\begin{tabular}{|c|c|c|c|c|}
\hline
$|V|$ & 4000 & 6000 & 8000 & 10000 \\ \hline
Running time (sec.) & 1.216 & 2.422 & 4.148 & 6.362 \\ \hline
\end{tabular}%
\end{table}

\section{Conclusions}
Traffic engineering and flow table scalability have been studied for unicast
traffic in SDN, but those issues in multicast SDN have not been carefully
addressed. In this paper, therefore, we exploited the branch forwarding
technique and proposed Branch-aware Steiner Tree (BST) for SDN. The BST
problem is more difficult since it needs to jointly minimize the edge and
branch node numbers in a tree, and we proved that this problem is NP-Hard
and inapproximable within $k$. To solve this problem, we designed a $k$%
-approximation algorithm, named Branch Aware Edge Reduction Algorithm
(BAERA). Simulation results manifest that the trees obtained by BAERA
include fewer edges and branch nodes, compared to the shortest-path trees
and Steiner trees. In addition, BAERA is efficient to be deployed in SDN because it can
generate a scalable and bandwidth-efficient multicast tree in massive
networks with only a few seconds.

\linespread{0.98}
\bibliographystyle{IEEEtran}
\bibliography{IEEEabrv,reference}

\end{document}